\newtheorem{theorem}{Theorem}
\newtheorem{lemma}{Lemma}
\newtheorem{remark}{Remark}
\newtheorem{corollary}{Corollary}
\newtheorem{proposition}{Proposition}
\newcommand{\Hc}{\mathcal{H}}
\newcommand{\Rt}{{\tilde{R}}}
\newcommand{\Vt}{{\tilde{V}}}
\newcommand{\pt}{{\tilde{p}}}
\def\l{\lambda}
\newcommand{\U}{\mathrm{Unif}}
\def\textiid{i.i.d.\@\xspace}
\newcommand\iid{\ifmmode\text{ i.i.d. } \else \textiid \fi}
\newcommand{\ind}{\boldsymbol{1}}
\begin{document}
\title{Distributed Multiple Testing with False Discovery Rate Control in the Presence of Byzantines} 


\author{%
  \IEEEauthorblockN{Daofu Zhang}
  \IEEEauthorblockA{Electrical and Computer Engineering \\
                    University of Utah\\
                    Email: daofu.zhang@utah.edu}
                      \bigskip\IEEEauthorblockN{Yu Xiang}
  \IEEEauthorblockA{Electrical and Computer Engineering\\ 
  University of Utah\\
                    Email: yu.xiang@utah.edu}
  \and
  \IEEEauthorblockN{Mehrdad Pournaderi}
  \IEEEauthorblockA{Microbiology and Immunology\\ 
                    University at Buffalo--SUNY\\
                    Email: mehrdadp@buffalo.edu}
                    \bigskip
  \IEEEauthorblockN{Pramod Varshney}
  \IEEEauthorblockA{Electrical Engineering and Computer Science\\ 
  Syracuse University\\
                    Email: varshney@syr.edu}
                                    
}

\maketitle


\begin{abstract}
   This work studies distributed multiple testing with false discovery rate (FDR) control in the presence of Byzantine attacks, where an adversary captures a fraction of the nodes and corrupts their reported $p$-values. We focus on two baseline attack models: an oracle model with the full knowledge of which hypotheses are true nulls, and a practical attack model that leverages the Benjamini-Hochberg (BH) procedure locally to classify which $p$-values follow the true null hypotheses. We provide a thorough characterization of how both attack models affect the global FDR, which in turn motivates counter-attack strategies and stronger attack models. Our extensive simulation studies confirm the theoretical results, highlight key design trade-offs under attacks and countermeasures, and provide insights into more sophisticated attacks.

\end{abstract}

\section{Introduction}

We consider the problem of testing multiple hypotheses over a network with a central agent, in the presence of Byzantine attacks; the hypotheses may come from testing multiple local test data samples (e.g., outlier detection). \emph{An adversary (or adversarial agent) can capture a fraction of the nodes and launch a Byzantine attack}. As a consequence, the attacked nodes will report statistics altered by an adversary to the central decision-making unit, thereby corrupting the statistical properties of the data. Specifically, we focus on the global performance under the false discovery rate (FDR) control~\cite{benjamini1995,benjamini2001control,efron2001empirical,genovese2002operating,storey2002direct}, a widely-used statistical measure that quantifies the expected proportion of false rejections. Our work is partially motivated by the recent line of works on outlier detection from the multiple testing perspective (e.g.,~\cite{bates2023testing,kaur2022idecode}), where the goal is to perform out-of-distribution detection under FDR control. Our setup can therefore help connect multiple testing frameworks and distributed settings under adversarial attacks, including distributed intrusion detection systems~\cite{tlili2024exhaustive}, identifying fraud patterns through collaborative analysis~\cite{hu2023collaborative}, and environmental monitoring using sensor networks~\cite{zhang2008framework}.


Without adversarial attacks, the distributed multiple testing problem under FDR control has been studied from various perspectives in the literature~\cite{ray2007novel,ray2011false,ermis2006detection,ermis2009distributed,Ramdas2017b,pournaderi2023sample,pournaderi2023large,golz2022multiple}. In the pioneering works~\cite{ray2007novel,ray2011false,ermis2006detection,ermis2009distributed}, the authors have investigated the distributed sensor networks under a broadcast model, where each sensor is
allowed to broadcast its decision to the entire network. More recently, it has been shown that FDR control can be achieved in multi-hop network settings~\cite{Ramdas2017b}. Along the same lines, the communication-efficiency perspective has been studied in the finite-sample~\cite{pournaderi2023sample} and asymptotic~\cite{pournaderi2023large} regimes. A similar theme has been investigated in~\cite{vempaty2014false}, yet under a completely different formulation from this work.


The objective of this study is to understand the impact of Byzantine attacks in terms of controlling the global FDR over the entire network. Our contributions are threefold. First, we introduce two baseline attack models. One is the oracle setting where the attacker has knowledge of the underlying hypothesis (true null vs. false null hypothesis) of each $p$-value under attack. This baseline model is the ideal setting that can not be realized in real-world scenarios. This motivates us to study a practical attack model that relies on using the celebrated Benjamini-Hochberg (BH) procedure~\cite{benjamini1995}, which controls FDR, as a classification technique. Then, we formally characterize the cost in terms of FDR under both models and develop counter-attack schemes along with stronger attack models (\emph{enhanced BH‐classifier attack} and \emph{shuffling attack}), building on our baseline model.  Lastly, we carry out extensive experimental studies to verify our theoretical findings as well as explore other potential attack strategies. 


\section{Problem Formulation}
\label{sec:form}
 Suppose that there are $n$ null hypotheses distributed over a network with $d$ nodes along with one central agent, where each node needs to test $n/d$ hypotheses and we assume $n/d$ is an integer for simplicity of presentation throughout this work. 

Let $\mathsf{H_{0,i}, 1 \leq i\leq n}$, denote the null hypotheses and each node performs their test based on the test statistics $X_i, 1 \leq i\leq n$. Let $p_i = 2 \cdot \min\left\{ \mathsf{F}_{\mathsf{H_{0,i}}}(X_i),\, 1 - \mathsf{F}_{\mathsf{H_{0,i}}}(X_i) \right\}, 1 \leq i \leq n,$ denote the $p$-values computed for the test statistics, where $\mathsf{F}_{\mathsf{H_{0,i}}}$ is the CDF of $X_i$ under $\mathsf{H_{0,i}}$. Let $\Hc_0$ denote the set consisting of all the true null hypotheses (or true nulls for short) and we assume that the cardinality of $\Hc_0$ is $n_0$, that is, $|\Hc_0|=n_0$. Throughout this work, we assume that all the $n_0$ $p$-values under the null hypothesis are independent and they are independent of the non-null $p$-values, which is the classical assumption in the FDR literature; even though some of our results can be readily extended to some dependent settings, we leave the comprehensive treatment for future work.

The FDR measures the expected incorrect rejections of true null hypotheses, among all rejected hypotheses: 
\begin{equation}
\text{FDR} = \mathbb{E}\left[\frac{V}{R\vee 1}\right],\nonumber
\end{equation}
where $V$ is the number of false rejections, $R$ is the total number of rejections, and $a\vee b:=\max\{a,b\}$. The power of a multiple testing procedure is the expected true positive proportion, defined as $\text{power} = \mathbb{E}\left[\frac{R-V}{n_1\vee 1}\right]$, where $n_1=n-n_0$. 

The attacker captures a fraction $\lambda$ of nodes that have $m$ $p$-values in total, $\{p_i\}_{i\in \Hc^a}$ with $|\Hc^a|=m$, and carries out the attack by changing them to $\{\pt_i\}_{i\in \Hc^a}$ in an adversarial way. Note that this implies that the fraction is $\lambda=m/n$. Among all the  $m=m_0+m_1$ $p$-values, there are $m_0$ true nulls (indexed by $\Hc^a_0$ with $|\Hc^a_0|=m_0$) and $m_1$ non-nulls (indexed by $\Hc^a_1$ with $|\Hc^a_1|=m_1$).  Throughout this work, we assume that the nodes being attacked need to send $\{\pt_i\}_{i\in \Hc^a}$ to the central server (i.e., the central agent will receive $n$ $p$-values from all of the $d$ nodes). After receiving all the $p$-values sent by the nodes in the network, the central agent runs the BH procedure globally to make $\tilde{R}$ rejections, we assume the target FDR level $q>0$ throughout the work and $\tilde{V}$ denotes the total number of false rejections, leading to the FDR after the attack:
\begin{equation}
\text{FDR}_{\text{attack}} = \mathbb{E}\left[\frac{\Vt}{\Rt\vee 1}\right].\nonumber
\end{equation} 
In the following sections, we will start by analyzing an oracle attack setting and then study one practical attack model (based on the BH procedure) as well as counter-attack schemes. 

\section{Oracle Attack Analysis}
\label{sec:oracle}
 For simplicity, we start with analyzing the setting when the attacker captures one node that has $m$ $p$-values; this implies that $m=n/d$ according to Section~\ref{sec:form}. The BH procedure conducted at the central agent consists of three steps:
\begin{itemize}\setlength\itemsep{0.06em}
    \item[(1)] Arrange all $p$-values in increasing order $p_{(1)}\leq p_{(2)} \leq \dots\leq p_{(n)}$.
    \item[(2)] Compute thresholds for each $p$-value as $\tau_i := q\cdot(i/n)$, where $q$ is the desired FDR level.
    \item[(3)] Reject all hypotheses with $p$-values $p_{(i)}\leq p_{(i_0)}$, where $i_0:=\max\{i: p_{(i)}\le \tau_i\}$.
\end{itemize} 

We first introduce the \emph{oracle attack model}, where we assume that the attacker knows which $m_0$ $p$-values are true nulls.

\smallskip
\noindent{\bf Oracle attack model:} \emph{The attacker changes all of the $m_0$ true null $p$-values to $0$ and changes the rest of the $m_1$ non-null $p$-values to $1$. }
\medskip


We denote the FDR under the oracle attack model by $\text{FDR}^*_{\text{attack}}$. 
Now we are ready to state our first main result. 
\begin{theorem}
\label{thm1}
    Suppose the attacker captures one node with $m$ $p$-values, and carries out the oracle attack. Then
\begin{align}    \text{FDR}^*_{\text{attack}}=m_0\cdot\mathbb{E}\left[\frac{1}{\Rt\vee 1}\right]+\frac{q(n_0-m_0)}{n},
\end{align}  
    when the BH procedure is applied at the central agent. 
\end{theorem}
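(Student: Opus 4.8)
The plan is to split the false rejections according to whether a rejected true null was attacked or not, and then treat the two pieces separately. After the oracle attack the central agent sees $m_0$ corrupted $p$-values equal to $0$, $m_1$ corrupted $p$-values equal to $1$, and the $n-m$ untouched $p$-values, of which those in $\Hc_0\setminus\Hc^a_0$ (there are $n_0-m_0$ of them) are genuine nulls, independent and uniform on $[0,1]$. I would first observe that each of the $m_0$ injected zeros is always rejected: since the BH thresholds satisfy $\tau_j=qj/n>0$ for every $j\ge 1$, the $m_0$ smallest positions (all carrying a zero) meet their thresholds, so $\tilde{R}\ge m_0$ and those $m_0$ hypotheses lie in the rejection set with probability one; the $m_1$ injected ones are inert (for $q<1$ a $p$-value equal to $1$ exceeds every threshold $\tau_j\le q$, and in any case they are non-nulls and cannot contribute to $\tilde{V}$). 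Writing $\tilde{V}=m_0+V'$, where $V'$ counts the rejected untouched nulls, linearity of expectation immediately produces the first term $m_0\,\mathbb{E}\!\left[1/(\tilde{R}\vee 1)\right]$.

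The remaining task is to show $\mathbb{E}\!\left[V'/(\tilde{R}\vee 1)\right]=q(n_0-m_0)/n$, i.e., that each untouched genuine null still contributes exactly $q/n$ to the FDR despite the injected zeros and ones. I would expand $V'=\sum_{i\in\Hc_0\setminus\Hc^a_0}\ind\{i\text{ rejected}\}$ and apply the classical leave-one-out argument for the step-up BH rule to each term. Fixing all other $p$-values (including the deterministic zeros and ones) and setting $p_i=0$, let $\tilde{R}^{(i)}$ denote the resulting rejection count; by the self-consistency of the step-up procedure, on the event that $i$ is rejected one has $\tilde{R}=\tilde{R}^{(i)}$, and $\tilde{R}^{(i)}$ does not depend on $p_i$. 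Conditioning on the other $p$-values and using that $p_i$ is uniform and independent of them gives $\mathbb{E}\!\left[\ind\{p_i\le q\tilde{R}^{(i)}/n\}/\tilde{R}^{(i)}\right]=\mathbb{E}\!\left[(q\tilde{R}^{(i)}/n)/\tilde{R}^{(i)}\right]=q/n$, where $q\tilde{R}^{(i)}/n\le q\le 1$ ensures the uniform CDF evaluates exactly to $q\tilde{R}^{(i)}/n$. Summing over the $n_0-m_0$ untouched nulls yields $q(n_0-m_0)/n$, and adding the first term gives the claimed identity.

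The routine pieces are the decomposition and the inertness of the injected ones. I expect the main obstacle to be verifying that the corrupted coordinates do not break the leave-one-out machinery: I must confirm that injecting $m_0$ zeros and $m_1$ ones preserves both the self-consistency identity $\tilde{R}=\tilde{R}^{(i)}$ on the rejection event (a purely combinatorial property of step-up thresholding that is insensitive to the numerical values of the remaining coordinates) and the independence and uniformity of each untouched null $p_i$ (which holds because the attacked coordinates are rendered deterministic, and the null $p$-values are assumed mutually independent and independent of the non-nulls). Finally I would note that $\tilde{R}^{(i)}\ge 1$, so no division by zero occurs and $\tilde{R}\vee 1=\tilde{R}$ on the rejection event, which is guaranteed since setting $p_i=0$ forces at least one rejection.
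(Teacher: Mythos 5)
Your proposal is correct and follows essentially the same route as the paper: the same decomposition of $\tilde{V}$ into the $m_0$ injected zeros (which are always rejected, giving $m_0\,\mathbb{E}[1/(\tilde{R}\vee 1)]$) plus the untouched nulls, each contributing exactly $q/n$. The only difference is that where the paper cites Lemma~3.2 of Blanchard--Roquain (the superuniformity/leave-one-out lemma) for the second term, you prove that lemma inline via the standard $\tilde{R}=\tilde{R}^{(i)}$ self-consistency argument and conditioning on $\mathcal{F}_i$; this is a valid, self-contained substitute for the citation and changes nothing of substance.
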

\begin{proof}
    Recall that we use $\Hc^a_0$ to denote the set of true nulls the attacker can access and $|\Hc^a_0|=m_0$ denotes the number of $p$-values changed to $0$ by the attacker. Noting that the $0$ $p$-values get rejected by the BH procedure since $q>0$ by our assumption, we get
\begin{align*}
\text{FDR}^*_{\text{attack}} & = \mathbb{E} \left[ \sum_{i \in \Hc^a_0} \frac{\tilde{V}_i}{\tilde{R} \lor 1} \right] 
+ \mathbb{E} \left[ \sum_{i \in \Hc_0 \setminus \Hc^a_0} \frac{\tilde{V}_i}{\tilde{R} \lor 1} \right] \\ 
& = \mathbb{E}\left[\frac{m_0}{\Rt \vee 1}\right] + \frac{q}{n}\sum_{i\in \Hc_0 \setminus \Hc^a_0}\mathbb{E}\left[\frac{\ind\{p_i\leq q \Rt/n\}}{(q/n) (\Rt \vee 1)}\right] \\
& = \mathbb{E}\left[\frac{m_0}{\Rt \vee 1}\right]  + q\frac{(n_0 - m_0)}{n},
\end{align*}
where $\tilde{V}_i = \ind\{p_i\leq q \Rt/n\}$, $\Hc_0 \setminus \Hc^a_0$ denotes the indices of true nulls that the attacker has not touched, and the last equality holds according to~\cite[Lemma~3.2]{blanchard2008} (also see~\cite{ramdas2019unified}).
\end{proof}
It is well-known that the BH procedure guarantees that $\text{FDR}=q(n_0/n)$. From Theorem~\ref{thm1}, it is straightforward to see that  $\text{FDR}^*_{\text{attack}}\ge q(n_0/n)$, which follows from the fact that $\tilde{R}\le n$. Therefore, the oracle attack will always result in an increase in the FDR. 
\begin{remark}
    In the proof of Theorem~\ref{thm1}, the second term holds regardless of the attack strategy. For the first term, one  always has $\mathbb{E} \left[ \sum_{i \in \Hc^a_0} \frac{\tilde{V}_i}{\tilde{R} \lor 1} \right] 
        \leq \mathbb{E}\left[\frac{m_0}{\Rt \vee 1}\right]$ as $\tilde{V}_i\leq 1$. Thus, Theorem~\ref{thm1} holds with inequality for any attack strategy and is achievable by the oracle attack. However, $\tilde{R}$ in the bound still depends on the attack model. To make the bound independent of the attacker strategy, one can upper bound the first term with 
\[
    \mathbb{E} \left[ \frac{m_0}{\tilde{R}(P_{a} \to 1) \lor 1} \right],
\]
where $\tilde{R}(P_{a} \to 1)$ denotes the number of rejections when all the attacked $p$-values are set to $1$. This bound cannot be achieved (except for $\Hc^a_0=\varnothing$) since the attacker's true nulls are considered rejected in the numerator and not rejected in the denominator.
\end{remark}

The same analysis applies when the attacker captures a fraction of nodes that have $m$ $p$-values in total, in which $m_0$ of them are true nulls.
\begin{corollary}

    Suppose that the attacker captures $\lambda d$ nodes with a total of $m$ $p$-values, where each node has $n/d$ local $p$-values. Each node carries out the oracle attack. Then
  \begin{align}
    \text{FDR}^*_{\text{attack},\l}=\text{FDR}^*_{\text{attack}}
    \end{align}  
    when the BH procedure is applied at the central agent.
\end{corollary}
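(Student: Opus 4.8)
The plan is to exploit the fact that the oracle attack is defined entirely through the true-null/non-null status of each \emph{individual} $p$-value, and is therefore indifferent to how the captured $p$-values are distributed across nodes. First I would observe that, whether the attacker controls a single node or $\lambda d$ nodes, running the oracle attack on every captured node amounts to setting each captured true-null $p$-value to $0$ and each captured non-null $p$-value to $1$. Aggregating over all captured nodes, the data received by the central agent consists of exactly $m_0$ zeros (from the attacked true nulls, which I collect into a single set $\Hc^a_0$ with $|\Hc^a_0|=m_0$), exactly $m_1$ ones (from the attacked non-nulls, $\Hc^a_1$ with $|\Hc^a_1|=m_1$), and the $n-m$ untouched original $p$-values. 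This reported multiset is identical to the one arising in the single-node setting of Theorem~\ref{thm1} whenever the aggregate counts $m_0$ and $m_1$ agree.

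Next I would invoke the fact that the global BH procedure depends on the received $p$-values only through their order statistics $p_{(1)}\le\cdots\le p_{(n)}$, so the node labels are irrelevant once the $p$-values reach the central agent. Consequently the global number of rejections $\Rt$ has the same distribution in the $\lambda d$-node setting as in the one-node setting of Theorem~\ref{thm1}. With $\Hc^a_0$ now reinterpreted as the union of attacked true nulls across all captured nodes, the decomposition in the proof of Theorem~\ref{thm1} carries over verbatim: the $m_0$ zeros are always rejected (since $q>0$), contributing $\mathbb{E}[m_0/(\Rt\vee 1)]$; and the untouched true nulls in $\Hc_0\setminus\Hc^a_0$ contribute $q(n_0-m_0)/n$ by the same application of~\cite[Lemma~3.2]{blanchard2008}. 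Summing the two terms yields $\text{FDR}^*_{\text{attack},\l}=\text{FDR}^*_{\text{attack}}$.

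The main obstacle, and really the only point requiring care, is justifying rigorously that the distribution of $\Rt$ is genuinely unchanged when the attacked $p$-values are spread across several nodes rather than concentrated on one. The hard part is to state precisely that all relevant quantities, namely $\Rt$, the false-rejection indicators $\tilde{V}_i$, and the untouched null $p$-values, are functions of the \emph{pooled} reported sample alone, so that the partition into nodes is a nuisance structure that plays no role in the global procedure. Once this permutation-invariance of BH is made explicit, every remaining step is a direct repetition of the Theorem~\ref{thm1} argument and no new estimate is needed.
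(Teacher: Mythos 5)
Your argument is correct and is essentially the paper's own reasoning: the paper offers no separate proof for this corollary, simply noting that ``the same analysis applies,'' which is precisely your observation that the oracle attack acts on each captured $p$-value individually (so the pooled reported multiset depends only on the aggregate counts $m_0$ and $m_1$) and that the global BH procedure is permutation-invariant in the received $p$-values. The permutation-invariance point you flag as the ``only point requiring care'' is indeed the entire content of the corollary, and your treatment of it is sound.
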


\section{BH-classifier Attack Model}
\label{sec:class}
\noindent Again, suppose that the attacker captures one node that has $m$ $p$-values, but the attacker does not know which ones are the true nulls or true non-nulls; we will then extend this to the multiple-node case. We now introduce a natural and practical attack model.

\noindent{\bf BH-classifier attack model:} 
\begin{itemize}\setlength\itemsep{0.06em}
    \item[(1)] \emph{The attacked node first applies the BH procedure on the $m$ $p$-values at hand by sorting them in ascending order ($p_{(1)}\leq p_{(2)}\leq\dots\leq p_{(m)}$).}
    \item[(2)] \emph{It returns an index $i_0:=\max\{i: p_{(i)}\le q(i/m)\}$ and classifies all $p_{(i)}$ where $i\leq i_0$ as non-nulls, and classifies the remaining $p$-values as true nulls.}
    \item[(3)]  \emph{It changes all those $p$-values classified as true nulls to be $0$, and classified as non-nulls to be $1$. }
\end{itemize}


Suppose that the attacker captures $\lambda d$ nodes with a total of $m$ $p$-values, where each node has $n/d$ local $p$-values, and let $m_{0,i}$ be the number of true nulls at node~$i$. Each node carries out the BH-classifier attack.

\begin{theorem}
    
    In this distributed setting, we have 
\begin{align*}
\text{FDR}_{\text{attack},\lambda}
&\le \sum_{i=1}^{\lambda d} m_{0,i}\,\mathbb{E}\!\left[\frac{1 - (qd/n)\,R_i}{\Rt \vee 1}\right] \\
&\quad\quad\quad\quad\quad\quad+ \frac{\bigl(n_0 - \sum_{i=1}^{\lambda d}m_{0,i}\bigr)\cdot q}{n}\,
\end{align*}
by applying the BH procedure at the central agent, where $R_i$ denotes the number of local rejections at the $ith$ node. 
\end{theorem}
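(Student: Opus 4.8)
The plan is to mirror the proof of Theorem~\ref{thm1}: separate the attacked true nulls from the untouched ones and bound the two contributions. Writing $\tilde V_j=\ind\{\text{$j$ is a true null that is rejected globally}\}$, I would decompose
\begin{align*}
\text{FDR}_{\text{attack},\lambda}
&= \sum_{i=1}^{\lambda d}\mathbb{E}\!\left[\sum_{j\in \Hc^a_{0,i}}\frac{\tilde V_j}{\Rt\vee 1}\right] \\
&\quad + \sum_{j\in \Hc_0\setminus \Hc^a_0}\mathbb{E}\!\left[\frac{\tilde V_j}{\Rt\vee 1}\right],
\end{align*}
where $\Hc^a_{0,i}$ collects the $m_{0,i}$ attacked true nulls at node~$i$. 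The untouched $p$-values are reported unchanged, so the second sum equals $\bigl(n_0-\sum_{i=1}^{\lambda d} m_{0,i}\bigr)q/n$ by exactly the Blanchard--Roquain identity used in Theorem~\ref{thm1}. It remains to bound the first sum.

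For the attacked term, I first identify which attacked true nulls actually become false discoveries. Since a reported $1$ is never rejected globally (as $q<1$ and $\Rt\le n$), whereas a reported $0$ always is, an attacked true null at node~$i$ is a false discovery exactly when the local BH classifies it as a true null, i.e.\ when it is \emph{not} among the $R_i$ local rejections. Hence $\sum_{j\in\Hc^a_{0,i}}\tilde V_j=m_{0,i}-V_i$, where $V_i$ denotes the number of true nulls among the local rejections at node~$i$. Substituting this, the theorem reduces to the single-node inequality
\[
\mathbb{E}\!\left[\frac{V_i}{\Rt\vee 1}\right]\;\ge\;\frac{qd}{n}\,m_{0,i}\,\mathbb{E}\!\left[\frac{R_i}{\Rt\vee 1}\right].
\]

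I would prove this by a leave-one-out argument carried through \emph{both} the local and the global stage. Fix every $p$-value except a true null $p_j$ with $j\in\Hc^a_{0,i}$, and let $R_i^{(0)},\Rt^{(0)}$ and $R_i^{(1)},\Rt^{(1)}$ (all depending on the chosen $j$) be the local and global rejection counts in the two extreme scenarios $p_j\to 0$ and $p_j\to 1$. When $p_j$ is small enough to be locally rejected, the local rejection set, and therefore the whole $0/1$ pattern node~$i$ reports, is frozen, so $R_i=R_i^{(0)}$ and $\Rt=\Rt^{(0)}$ do not depend on the exact value of $p_j$; moreover the standard BH leave-one-out fact gives that $j$ is locally rejected iff $p_j\le qdR_i^{(0)}/n$. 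Exact uniformity of the null $p_j$ then yields $\mathbb{E}\bigl[\ind\{j\text{ loc.\ rej.}\}/(\Rt\vee 1)\mid\text{others}\bigr]=(qd/n)\,R_i^{(0)}/(\Rt^{(0)}\vee 1)$, and summing over $j\in\Hc^a_{0,i}$ gives $\mathbb{E}[V_i/(\Rt\vee 1)]=(qd/n)\sum_{j}\mathbb{E}\bigl[R_i^{(0)}/(\Rt^{(0)}\vee 1)\bigr]$.

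The remaining comparison is the step I expect to be the crux. Conditioning on the other $p$-values, $\mathbb{E}[R_i/(\Rt\vee 1)]$ is the convex combination $t^*\,R_i^{(0)}/(\Rt^{(0)}\vee1)+(1-t^*)\,R_i^{(1)}/(\Rt^{(1)}\vee1)$ with $t^*=qdR_i^{(0)}/n$, so it suffices to verify $R_i^{(0)}/(\Rt^{(0)}\vee1)\ge R_i^{(1)}/(\Rt^{(1)}\vee1)$. Here the two-stage coupling must be handled with care: lowering $p_j$ can only \emph{increase} the number of local rejections, so $R_i^{(0)}\ge R_i^{(1)}$, but it makes node~$i$ report one more $1$ and one fewer $0$, which can only \emph{decrease} the global count, so $\Rt^{(0)}\le\Rt^{(1)}$; a larger numerator over a smaller denominator gives the inequality. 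The delicate points are exactly this opposite monotonicity of $R_i$ and $\Rt$ as $p_j$ varies, and the reliance on \emph{exact} uniformity of the null $p$-values: because the false-discovery event here is ``$p_j$ large,'' mere super-uniformity would move the bound in the wrong direction, so the argument genuinely uses that a two-sided null $p$-value is uniform.
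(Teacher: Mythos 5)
Your proof is correct and follows essentially the same route as the paper: the paper in fact omits the proof of this distributed statement (deferring to the centralized Theorem~3), and that proof is exactly your combination of the leave-one-out identity for the local BH count (Lemma~1), conditioning on $\mathcal{F}_j$ with \emph{exact} uniformity of the null $p$-values, and the opposite monotonicity $R_i(p_j\to 0)\ge R_i$ and $\Rt(p_j\to 1)\ge \Rt$. Your complementary-indicator bookkeeping ($\sum_{j}\tilde V_j=m_{0,i}-V_i$) and the convex-combination framing are only cosmetic differences, and your explicit observation that the global count satisfies $\Rt(p_j\to 1)\ge\Rt$ actually corrects the direction of the inequality as written in the paper's justification of the final bounding step.
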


The analysis of this distributed setting is a straightforward extension of the centralized setting (Theorem~\ref{thm:2}) and is thus omitted due to space limitations. In the following, we state and prove the result in the centralized setting. 


Suppose the attacker makes $R_a$ rejections after applying the BH algorithm in the classification step. We can upper bound the corresponding $\text{FDR}_{\text{attack}}$ as follows.

\begin{theorem}
\label{thm:2}

Suppose the attacker captures one node with $m$ $p$-values, and carries out the BH-classifier attack. Then
    \begin{align} \text{FDR}_{\text{attack}}&=\sum_{i\in \Hc^a_0}\mathbb{E}\left[\frac{1-(q/m) R_{a}(p_i\to 0)}{\Rt(p_i\to 1)}\right]+\frac{q(n_0-m_0)}{n} \nonumber\\
    &\le m_0\, \mathbb{E} \left[ \frac{1-(q/m)R_{a}}{\Rt \vee 1}\right]+\frac{q(n_0-m_0)}{n}\label{eq:BH-attack}
    \end{align} when the BH procedure is applied at the central agent, where $\tilde{R}(p_i \to 1)$ and $R_a(p_i \to 1)$ denote the new rejection counts after replacing $p_i$ with $1$. 
\end{theorem}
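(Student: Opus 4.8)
The plan is to decompose $\text{FDR}_{\text{attack}}=\mathbb{E}[\tilde V/(\Rt\vee1)]$ over the true nulls and to split $\Hc_0$ into the $m_0$ attacked nulls $\Hc^a_0$ and the $n_0-m_0$ untouched nulls $\Hc_0\setminus\Hc^a_0$. For an untouched null the reported value is the original $p_i$, which is rejected globally exactly when $p_i\le q\Rt/n$; this is identical to the situation in Theorem~\ref{thm1}, so the leave-one-out identity~\cite[Lemma~3.2]{blanchard2008} gives a contribution of $q/n$ per such index and hence the second term $q(n_0-m_0)/n$. The whole difficulty therefore concentrates in the first sum, over $\Hc^a_0$.

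Next I would pin down how an attacked true null contributes. For $i\in\Hc^a_0$ the reported value is $0$ precisely when $p_i$ is \emph{not} rejected by the local BH run (it is then labeled a null), in which case it is surely rejected globally since $q>0$; if instead $p_i$ is rejected locally it is reported as $1$ and never rejected. Hence $\tilde V_i=1-\tilde r_i$, where $\tilde r_i=\ind\{p_i\le (q/m)R_a\}$ is the local rejection indicator. Using the standard step-up characterization locally, $\tilde r_i=\ind\{p_i\le (q/m)R_a(p_i\to0)\}$, and the leave-one-out count $R_a(p_i\to0)$ depends only on the other local $p$-values, so it is independent of $p_i$.

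The core step is a leave-one-out on $p_i$, conditioning on all remaining $p$-values. The main obstacle is that $p_i$ is coupled to the reports of the \emph{other} attacked $p$-values through the local count $R_a$: lowering $p_i$ past the threshold $t_i:=(q/m)R_a(p_i\to0)$ changes $R_a$ and can re-label its neighbors. The resolution is that the reported configuration is piecewise constant in $p_i$: on $\{p_i>t_i\}$ index $i$ is reported as $0$, every other report is frozen, and the global rejection count equals the count obtained by replacing the raw $p_i$ with $1$, namely $\Rt(p_i\to1)$ (that configuration also forces $i$ to be reported as $0$, so $\Rt(p_i\to1)\ge1$ and no $\vee 1$ is needed). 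Since both $t_i$ and $\Rt(p_i\to1)$ are independent of $p_i$, and $p_i$ is uniform on $[0,1]$ under the null, integrating gives $\mathbb{E}[\tilde V_i/(\Rt\vee1)\mid\text{others}]=(1-(q/m)R_a(p_i\to0))/\Rt(p_i\to1)$; taking expectations and summing the $m_0$ indices yields the claimed equality.

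Finally, for the displayed inequality I would bound each of the $m_0$ terms by a common quantity using two monotonicity properties of step-up BH. Forcing $p_i$ to $0$ only adds local rejections, so $R_a(p_i\to0)\ge R_a$ and the numerator satisfies $1-(q/m)R_a(p_i\to0)\le 1-(q/m)R_a$; and replacing the raw $p_i$ by $1$ turns its report into a $0$ and can only convert other reports from $1$ to $0$, weakly increasing the global count, so $\Rt(p_i\to1)\ge \Rt\vee1$. Both substitutions enlarge the fraction, giving the termwise bound $\mathbb{E}[(1-(q/m)R_a)/(\Rt\vee1)]$ and the factor $m_0$ after summation. I expect the third step---the piecewise-constant/coupling argument that identifies $\Rt(p_i\to1)$ as the relevant global count---to be the crux.
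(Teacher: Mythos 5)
Your proposal is correct and follows essentially the same route as the paper's proof: the same split of the nulls into $\Hc^a_0$ and $\Hc_0\setminus\Hc^a_0$, the same reduction of $\tilde V_i$ to $\ind\{p_i>qR_a/m\}$, the same leave-one-out substitutions $R_a\to R_a(p_i\to 0)$ and $\Rt\vee 1\to\Rt(p_i\to 1)$ (the paper's Lemma~\ref{lem:1}) followed by conditioning on $\mathcal{F}_i$ and integrating the uniform null $p_i$, and the same two monotonicity facts for the final bound. One small point in your favor: your direction $\Rt(p_i\to 1)\ge \Rt\vee 1$ is the one the inequality actually needs, whereas the paper's proof states $\Rt(p_i\to 1)\le \Rt$, which appears to be a typo.
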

We can easily see that the upper bound in~\eqref{eq:BH-attack} can be further upper bounded by the oracle FDR given in Theorem~\ref{thm1}. Furthermore, when $R_a/m \approx 0$, the upper bound in~\eqref{eq:BH-attack} is close to the oracle FDR (see Experiment~1 in Section~\ref{sec:exp} for numerical examples). Thus, the BH-classifier attack model can be viewed as a practical baseline, which serves as a surrogate for the oracle attack model.


\begin{proof}

Recall that we use $\Hc^a_0$ to denote the set of all the true nulls that the attacker has at hand and $|\Hc^a_0|=m_0$. Then, we can express FDR after the attack as follows,
\begin{equation}
\text{FDR}_{\text{attack}} = \mathbb{E} \left[ \sum_{i \in \Hc^a_0} \frac{\tilde{V}_i}{\tilde{R} \lor 1} \right] 
+ \mathbb{E} \left[ \sum_{i \in \Hc_0 \setminus \Hc^a_0} \frac{\tilde{V}_i}{\tilde{R} \lor 1} \right],\label{eq:attack}
\end{equation}
where $\tilde{V}_i = \ind\{p_i\leq q \Rt/n\}$.
For each $i \in \Hc^a_0$ in the first term, 
\begin{align}
  \mathbb{E} \left[  \frac{\tilde{V}_i}{\tilde{R} \lor 1} \right] 
    &= \mathbb{E}\left[\frac{\ind\{p_i > q R_{a}/m\}}{\Rt \vee 1}\right]\label{pqm} \\
    &= \mathbb{E}\left[\frac{\ind\{p_i > q R_{a}(p_i\to 0)/m\}}{\Rt(p_i\to 1) }\right]\label{lem1},
\end{align}
where \eqref{pqm} comes from the fact that if and only if $p_i>qR_a/m$, $p_i$ will not be rejected by the attacker's BH classification and $\tilde{p_i}$ will be $0$ accordingly which will make $\tilde{V_i}$ to be $1$. And \eqref{lem1} holds because of Lemma~\ref{lem:1}, and the fact that $\Rt(p_i\to 1)=\Rt \vee 1$ when $\ind\{p_i > q R_{a}/m\}=1$. 
Hence, the first term in~\eqref{eq:attack} can be expressed as 
\begin{align*}
&\hspace{-1em}\mathbb{E} \left[ \sum_{i \in \Hc^a_0} \frac{\tilde{V}_i}{\tilde{R} \lor 1} \right] 
    = \sum_{i\in \Hc^a_0}\mathbb{E}\left[\frac{\ind\{p_i > q R_{a}(p_i\to 0)/m\}}{\Rt(p_i\to 1) }\right].
\end{align*}
Conditioning on all $p$-values except $p_i$ which is represented as $\mathcal{F}_i = \sigma(\{p_1, \dots, p_{i-1}, p_{i+1}, \dots, p_n\})$, we get
\begin{align}
&\sum_{i\in \Hc^a_0}\mathbb{E}\left[\frac{\ind\{p_i > q R_{a}(p_i\to 0)/m\}}{\Rt(p_i\to 1) }\,\bigg|\, \mathcal{F}_i\right]\label{eq:conditional1}
\\
&= \sum_{i\in \Hc^a_0}\frac{\mathbb{E}\left[\ind\{p_i > q R_{a}(p_i\to 0)/m\}\,\bigg|\, \mathcal{F}_i\right]}{\Rt(p_i\to 1) }\\
&=\sum_{i\in \Hc^a_0}\frac{1-(q/m) R_{a}(p_i\to 0)}{\Rt(p_i\to 1)},
\end{align}
where we can move $\Rt(p_i\to 1)$ outside the expectation in~\eqref{eq:conditional1} because it is $\mathcal{F}_i$-measurable,
and the last equality comes from the fact that the $p$-value under true null follows $\U[0,1]$. 
We now use the tower property to bound the first term in~\eqref{eq:attack},
\begin{align*}
    &\sum_{i\in \Hc^a_0}\mathbb{E}\left[\frac{\ind\{p_i > q R_{a}(p_i\to 0)/m\}}{\Rt(p_i\to 1) }\right]\label{loop}\\
    &=\sum_{i\in \Hc^a_0}\mathbb{E}\left[\mathbb{E}\left[\frac{\ind\{p_i > q R_{a}(p_i\to 0)/m\}}{\Rt(p_i\to 1) }\,\bigg|\, \mathcal{F}_i\right]\right]\\
     &=\sum_{i\in \Hc^a_0}\mathbb{E}\left[\frac{1-(q/m) R_{a}(p_i\to 0)}{\Rt(p_i\to 1)}\right]
                  \leq  m_0\, \mathbb{E} \left[ \frac{1-(q/m)R_{a}}{\Rt \vee 1}\right],
\end{align*}
where the last line follows from the fact that the null $p$-values are \iid $\U[0,1]$ (hence exchangeable), $R_{a}(p_i\to 0)\ge R_{a}$, and $\Rt(p_i\to 1)\le \Rt$. For the second term in~\eqref{eq:attack}, we have
\begin{align*}
    \mathbb{E} \left[ \sum_{i \in \Hc_0 \setminus \Hc^a_0} \frac{\tilde{V}_i}{\tilde{R} \lor 1} \right]
    &=  \sum_{i\in \Hc_0 \setminus \Hc^a_0}\mathbb{E}\left[\frac{ \ind\{p_i\leq q \Rt/n\}}{\Rt \vee 1}\right] \\ 
    & = q\frac{(n_0 - m_0)}{n},
\end{align*}
where we noted that $\pt_i=p_i$ for those $i \in \Hc_0 \setminus \Hc^a_0$ and the last equality follows from the same argument as in Theorem~\ref{thm1}. 

Putting everything together, we get
\begin{align*}
\text{FDR}_{\text{attack}} 
&\leq m_0\, \mathbb{E} \left[ \frac{1-(q/m)R_{a}}{\Rt \vee 1}\right]+\frac{(n_0-m_0)\cdot q}{n}.
\end{align*}
\end{proof}


\begin{lemma}\label{lem:1}
    For each $i \in \Hc^a_0$, we have 
    \begin{equation*}
        \mathbf{1}\{p_i >qR_a(p_i \to 0)/m \}=\mathbf{1}\{p_i >qR_a/m \}.
    \end{equation*}
\end{lemma}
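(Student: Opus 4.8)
The plan is to show the two indicators agree by proving that the events $\{p_i > qR_a/m\}$ and $\{p_i > qR_a(p_i\to 0)/m\}$ coincide. The starting point is the \emph{self-consistency} of the attacker's BH procedure: writing $R_a=\max\{k:p_{(k)}\le qk/m\}$ for the attacker's rejection count, a $p$-value is classified as non-null (rejected) precisely when it is $\le qR_a/m$. Indeed, any rejected value is $\le p_{(R_a)}\le qR_a/m$; conversely, a value $\le qR_a/m$ occupying an order position $k>R_a$ would give $p_{(k)}\le qR_a/m<qk/m$, contradicting the maximality of $R_a$. Since the null $p$-values are continuous, I assume throughout that all $p$-values are distinct (ties have probability zero). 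Thus $\{p_i>qR_a/m\}$ is exactly the event that the attacker does \emph{not} reject $p_i$, and the lemma reduces to comparing the original threshold with the one obtained after setting $p_i$ to $0$.

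I would first treat the rejected case $p_i\le qR_a/m$, where $p_i=p_{(r)}$ with $r\le R_a$. Replacing $p_i$ by $0$ merely moves it to the front of the sorted list and shifts $p_{(1)},\dots,p_{(r-1)}$ up by one position, leaving every order statistic at positions $>R_a$ equal to the original $p_{(k)}$. Hence for $k>R_a$ the BH inequality still fails ($p_{(k)}>qk/m$), while it continues to hold at $k=R_a$; the maximal selected index is unchanged, so $R_a(p_i\to 0)=R_a$. Both indicators then equal $0$, and this case is routine bookkeeping on the sorted order.

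The crux is the non-rejected case $p_i>qR_a/m$, i.e.\ $p_i=p_{(r)}$ with $r>R_a$, where I must establish $p_i>qR_a(p_i\to 0)/m$ as well. The difficulty is that lowering $p_i$ to $0$ can only \emph{increase} the rejection count ($R_a(p_i\to 0)\ge R_a$ by monotonicity of step-up procedures), so the threshold $qR_a(p_i\to 0)/m$ naively moves the wrong way. The resolution is a one-sided bound rather than an exact value: all order statistics strictly beyond position $r$ are unaffected by the change at $p_i$ and already fail the BH inequality ($p_{(k)}>qk/m$ for $k>r>R_a$), so no index exceeding $r$ can be selected, giving $R_a(p_i\to 0)\le r$. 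Combining this with $p_i=p_{(r)}>qr/m$ (which holds because $r>R_a$ and $R_a$ is the largest index meeting the BH condition) yields $p_i>qr/m\ge qR_a(p_i\to 0)/m$, so the indicator is again $1$ on both sides.

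I expect the main obstacle to be exactly this non-rejected case: getting the direction of the inequality right despite the fact that forcing $p_i=0$ raises the number of rejections. The key realization that unlocks it is that one never needs the precise value of $R_a(p_i\to 0)$, only the easy upper bound $R_a(p_i\to 0)\le r$ together with the fact that $p_{(r)}$ itself already exceeds $qr/m$; the self-consistency characterization and the rejected-case computation are then straightforward.
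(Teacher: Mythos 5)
Your proposal is correct and follows essentially the same route as the paper's proof: a case split on whether $p_i$ is locally rejected, with the rejected case handled by noting the rejection count is unchanged, and the non-rejected case resolved via the bound $R_a(p_i\to 0)\le r$ together with $p_{(r)}>qr/m$. Your write-up simply spells out the self-consistency property and the order-statistic bookkeeping that the paper leaves implicit.
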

\begin{proof}
First consider the case when $p_i\leq qR_a/m$, then this $p$-value is already rejected. Pushing it to $0$ will not change the total rejection $R_a$, which means $p_i\leq qR_a(p_i \to 0)/m$. 
    
    Now consider the other case when $p_i>qR_a/m$, without loss of generality, we assume $p_i$ is the $i$th smallest $p$-value. Since $p_i$ was not rejected, we have $p_i>qi/m$. Also, note that $R_a(p_i \to 0) \leq i$ because sending $p_i$ to $0$ will change the threshold only for $p$-values smaller than $p_i$. Hence $p_i>qR_a(p_i \to 0)/m$. The claimed equality holds for both cases.
    
\end{proof}

\subsection{Counter-attack strategy}
\label{sec:counter}
Suppose that the central server knows (1) the attacker is implementing the BH-classifier attack model, and (2) which nodes are part of the Byzantine (i.e., nodes that have been captured by the attacker). It is natural to ask if it is possible to mitigate the FDR loss. It turns out that the FDR can be controlled by implementing a simple scheme as follows.

\smallskip
\noindent{\bf Counter-attack scheme:} \emph{For each of the $p$-values that have been set to $0$ by the attacker, the central server replaces it with a sample drawn from $\U[0, 1]$. }
\smallskip

\begin{proposition}
    This counter-attack strategy controls FDR. 
\end{proposition}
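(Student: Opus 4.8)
The plan is to show that after the server's re‑randomization, the vector of $p$‑values on which the central agent runs the global BH procedure satisfies the conditional super‑uniformity property that underlies the standard BH control argument, so that $\text{FDR}\le q\,n_0/n\le q$. Write $p_i^{\text{new}}$ for the $p$‑value of hypothesis $i$ actually seen by the central agent, and let $\Rt^{\text{new}}$ be the resulting number of rejections. Among the true nulls there are three kinds: (i) those at uncaptured nodes, where $p_i^{\text{new}}=p_i$ is the original $\U[0,1]$ draw; (ii) captured true nulls the attacker classified as non‑null and set to $1$, where $p_i^{\text{new}}=1$; and (iii) captured true nulls the attacker classified as null and set to $0$, where the server overwrites the $0$ with a fresh independent $U_i\sim\U[0,1]$. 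Since only true nulls contribute to the numerator, $\text{FDR}=\sum_{i\in\Hc_0}\E\!\left[\ind\{p_i^{\text{new}}\le q\Rt^{\text{new}}/n\}/(\Rt^{\text{new}}\vee1)\right]$.

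Following the leave‑one‑out template already used in Theorem~\ref{thm1} and Theorem~\ref{thm:2}, I would fix a true null $i$, invoke the self‑consistency of BH to replace $\Rt^{\text{new}}$ by the $\mathcal{F}_i$‑measurable quantity $\Rt^{\text{new}}(p_i\to1)$ on the rejection event (with $\mathcal{F}_i=\sigma(\{p_j^{\text{new}}\}_{j\ne i})$), and then condition on $\mathcal{F}_i$. The whole argument reduces to the single claim that each true null obeys $\P(p_i^{\text{new}}\le s\mid\mathcal{F}_i)\le s$ for every $\mathcal{F}_i$‑measurable threshold $s\le q\Rt^{\text{new}}(p_i\to1)/n\le q<1$. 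Granting this, pulling $\Rt^{\text{new}}(p_i\to1)$ out of the conditional expectation and applying \cite[Lemma~3.2]{blanchard2008} gives $\E[\,\cdot\,/(\Rt^{\text{new}}\vee1)]\le q/n$ per true null, and summing over the $n_0$ true nulls yields $\text{FDR}\le q\,n_0/n\le q$.

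I would then verify the conditional super‑uniformity case by case. For type (i) it is immediate: $p_i^{\text{new}}=p_i$ is $\U[0,1]$ and, being a true null, is independent of every other original $p$‑value and of the server's fresh draws, hence of $\mathcal{F}_i$, so its conditional law is uniform. For types (ii)–(iii) together, note that for $s<1$ we have $\ind\{p_i^{\text{new}}\le s\}=\ind\{i\in S\}\,\ind\{U_i\le s\}$, where $S$ is the (data‑dependent) set of captured true nulls pushed to $0$; a type‑(ii) coordinate equals $1>s$ and is never counted. Because $U_i$ is drawn by the server independently of all original data and of the other fresh draws, it is independent of $\sigma(\mathcal{F}_i,\ind\{i\in S\})$, whence $\E[\ind\{i\in S\}\ind\{U_i\le s\}\mid\mathcal{F}_i]=s\,\P(i\in S\mid\mathcal{F}_i)\le s$, the desired bound. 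Combining the cases completes the per‑null estimate and hence the FDR bound.

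The main obstacle — and the reason the oracle and BH‑classifier attacks inflate the FDR while this scheme does not — is precisely that the classification event $\{i\in S\}$ is correlated with the other reported coordinates in $\mathcal{F}_i$, so one cannot simply assert that $p_i^{\text{new}}$ is independent of the rest. The resolution I would emphasize is that re‑randomization \emph{decouples the value from the classification}: since $U_i$ is a fresh uniform independent of both $\mathcal{F}_i$ and $\{i\in S\}$, the conditional expectation factorizes and the harmful correlation carried by $\{i\in S\}$ is absorbed into a probability bounded by $1$, leaving a clean super‑uniform estimate. The remaining ingredients (BH self‑consistency and the identity from \cite{blanchard2008,ramdas2019unified}) are routine and identical to those already used above.
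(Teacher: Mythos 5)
Your proposal is correct, and it supplies the argument that the paper only gestures at: the paper's entire justification for this proposition is the one-line remark that true null $p$-values are $\U[0,1]$ and that altering non-null $p$-values does not affect the FDR, with no formal proof given. The substantive point you add --- and it is the right one --- is that the replaced set $S$ is data-dependent (it is determined by the attacker's local BH run, hence correlated with the original $p_i$ and with the other reported coordinates), so one cannot simply declare the re-randomized nulls to be uniform and independent of the rest; your decoupling step, using that the fresh $U_i$ is independent of $\sigma(\mathcal{F}_i,\ind\{i\in S\})$ so that the conditional probability factors as $s\,\P(i\in S\mid\mathcal{F}_i)\le s$, closes exactly this gap, and your three-case split (untouched nulls, captured nulls sent to $1$, captured nulls re-randomized) is exhaustive. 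One correction to the step you label routine: the BH self-consistency identity goes through $p_i\to 0$, not $p_i\to 1$. On the rejection event $\{p_i^{\text{new}}\le q\Rt^{\text{new}}/n\}$ one has $\Rt^{\text{new}}=\Rt^{\text{new}}(p_i\to 0)$, and it is this quantity that is $\mathcal{F}_i$-measurable and serves as the threshold; setting a rejected $p$-value to $1$ removes it from the rejection set and can cascade, so $\Rt^{\text{new}}(p_i\to 1)$ does not coincide with $\Rt^{\text{new}}$ there. Equivalently, your conditional super-uniformity combined with \cite[Lemma~3.2]{blanchard2008} applied to the nonincreasing map $u\mapsto \Rt^{\text{new}}(p_i = u)\vee 1$ gives the per-null bound $q/n$ directly. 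With that substitution the proof is complete and yields $\text{FDR}\le q\,n_0/n\le q$, consistent with the paper's claim.
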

Here we leverage the fact that true null $p$-values are distributed according to $\U[0, 1]$ and altering the non-null $p$-values won't affect the FDR. This is by no means the only possible counter-attack scheme and we leave the other effective ones for future work.
\subsection{Two stronger attack models}
Since the BH-classifier attack model fails to affect FDR when the central server knows the attack scheme and applies the simple counter-attack scheme. In this subsection, we introduce two attack models which are hard to be counter-attacked by the central server. 
\smallskip
\begin{itemize}[nosep]
    \item \textbf{Enhanced BH-classifier attack model}: \emph{The attacked node first applies BH to classify the local $p$-values. Then, the ones that are classified as nulls are scaled to the range of the classified non-nulls and vice versa.}
    \item \textbf{Shuffling attack model}: \emph{The attacker randomly permutes the indices of all its local $p$-values and then sends $p$-values with the new indices to the central server.}
\end{itemize}
\smallskip

 In the enhanced BH-classifier model, the idea is to hide the identities of the classified nulls into the classified non-nulls. The shuffling attack model decouples each attacked $p$-value and its corresponding hypothesis, and the global BH threshold does not change. Specifically, each $p$-value under attack is true null with probability $m_0/m$ and non-null with probability $m_1/m$; one can upper bound the $\text{FDR}_{\text{attack}}$ as $m_0\cdot(\frac{m_0q}{mn}+\frac{m_1}{m}\mathbb{E}\left[\frac{1}{R} \right])$.

\section{Experimental Results}
\label{sec:exp}

In this section, we compare the $\text{FDR}^*_{\text{attack}}$ and $\text{FDR}_{\text{attack}}$ by conducting a series of experiments. The total number of hypotheses is fixed at $n = 10^4$ ($n_0$ true nulls and $n_1 = n - n_0$ non-nulls) and the level $q$ is fixed at $0.05$. The attacker has $m$ $p$-values in hand ($m_0$ true nulls and $m_1=m-m_0$ non-nulls). Adversarial modifications are applied as specified in the two attack models: oracle attack and BH-classifier attack. For all the experiments, the $p$-values are generated as follows:
\begin{itemize}[nosep]
    \item True null hypothesis: The test statistics are sampled from $N(0, 1)$. The two-sided $p$-values are calculated as $p_i = 2 (1 - \Phi(|X_i|))$, 
    where $\Phi$ is the cumulative distribution function of the standard normal distribution.
    \item Alternative hypothesis: The test statistics are sampled from $N(\mu, 1)$, where $\mu \sim \text{Unif}(1.0, 1.5)$.
\end{itemize}

\smallskip
\noindent\underline{\bf Exp. 1: FDR under oracle vs. BH-classifier attacks}
\smallskip

\noindent{\bf Setting 1: Varying $n_0$ and $n_1$.} For fixed attacker fraction $m/n$=$0.2$, we analyze the impact of changing the proportion of true nulls ($n_0$) and non-nulls ($n_1$) while keeping $n = 10^4$.

\smallskip
\noindent{\bf Setting 2: Varying $m$.} For fixed proportion of true nulls and non-nulls ($n_0=8000,n_1=2000$), we evaluate the effect of varying the number of $p$-values modified by the attacker. We compute $\text{FDR}^*_{\text{attack}}$ and $\text{FDR}_{\text{attack}}$ to compare the gap as $m_0$ and $m_1$ increase.




\begin{figure}[htbp]
    \centering
    \begin{subfigure}[t]{1.0\linewidth} 
        \centering
        \includegraphics[width=\linewidth]{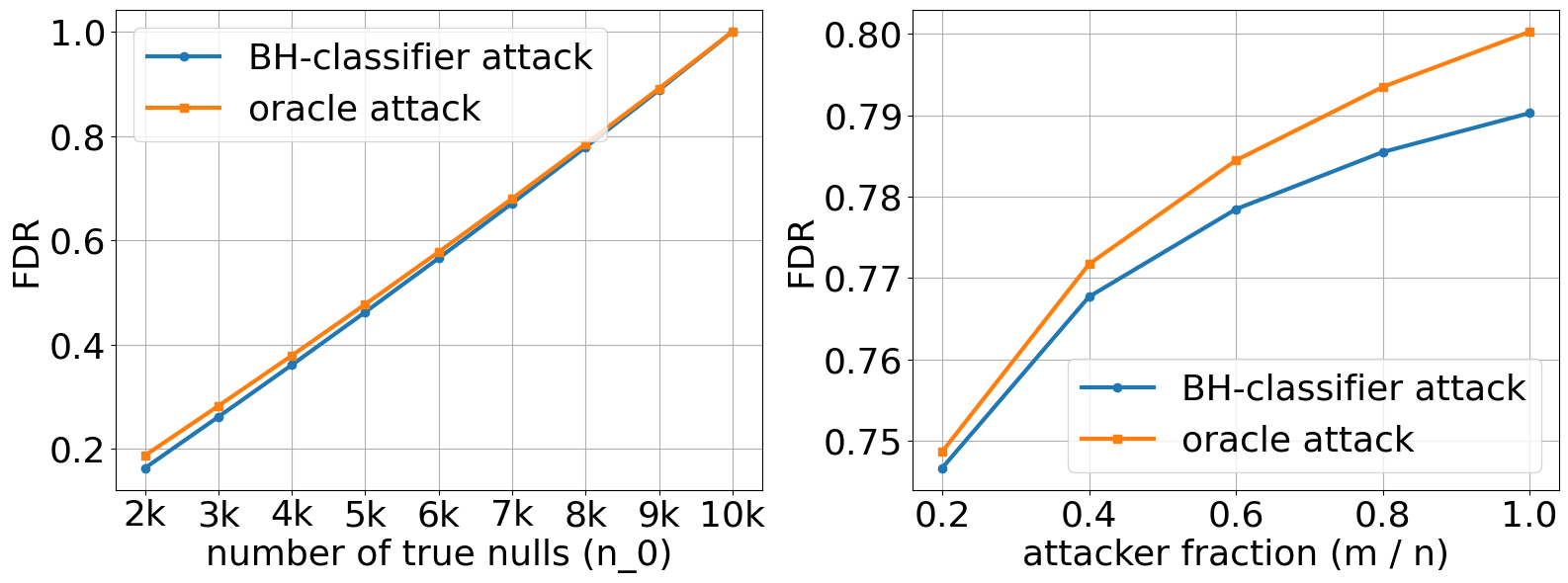} 
        \label{fig:fig1}
    \end{subfigure}
    \hspace{0.05\linewidth} 
\vspace{-1.2em}
    \caption{Exp. 1. In both settings, the gap between $\text{FDR}^*_{\text{attack}}$ and $\text{FDR}_{\text{attack}}$ remains negligible overall.}
    \label{fig:side_by_side_fixed}
    \vspace{-1em}
\end{figure}
This experiment shows that the BH-classifier attack incurs almost the same amount of degradation of FDR as the oracle model in these settings, implying that the BH-classifier attack model, without the information of which $p$-values are true nulls, can be viewed as a practical baseline that approximates the oracle setting very well.

\smallskip
\noindent\underline{\bf Exp. 2: Counter-attack strategy}
\smallskip

\noindent In this experiment, we evaluate the effectiveness of a counter-attack strategy employed by the central server to mitigate the impact of adversarial attacks. Focusing on attacking one node with $m$ $p$-values, we assume that the central server knows (1) the attacker is implementing the BH-classifier attack model, and (2) which node is under attack. 

We empirically compare $\text{FDR}_{\text{attack}}$ with and without applying this counter-attack scheme as mentioned in Section~\ref{sec:counter} along with removing all the $0$ $p$-values. The data generation process is the same as in previous experiments. The empirical $\text{FDR}_{\text{attack}}$ is estimated by averaging over $10^4$ trials for different numbers of $n_0$ while keeping $m=2000$.
\begin{itemize}[nosep]
    \item \textbf{Without counter-attack}: The central server directly applies the BH procedure to all the received $p$-values including the adversarially modified $p$-values.
    \item \textbf{With counter-attack}: (I) The central server replaces all $0$ $p$-values received from the attacker with independently samples from $\U(0, 1)$ and then applies BH over all the $p$-values. (II) The central server simply removes all those $0$ $p$-values and then applies BH over the remaining ones.
\end{itemize}

\begin{figure}[h!]
    \centering
    
    \begin{subfigure}{0.5\textwidth}
        \centering
        \includegraphics[width=\linewidth]{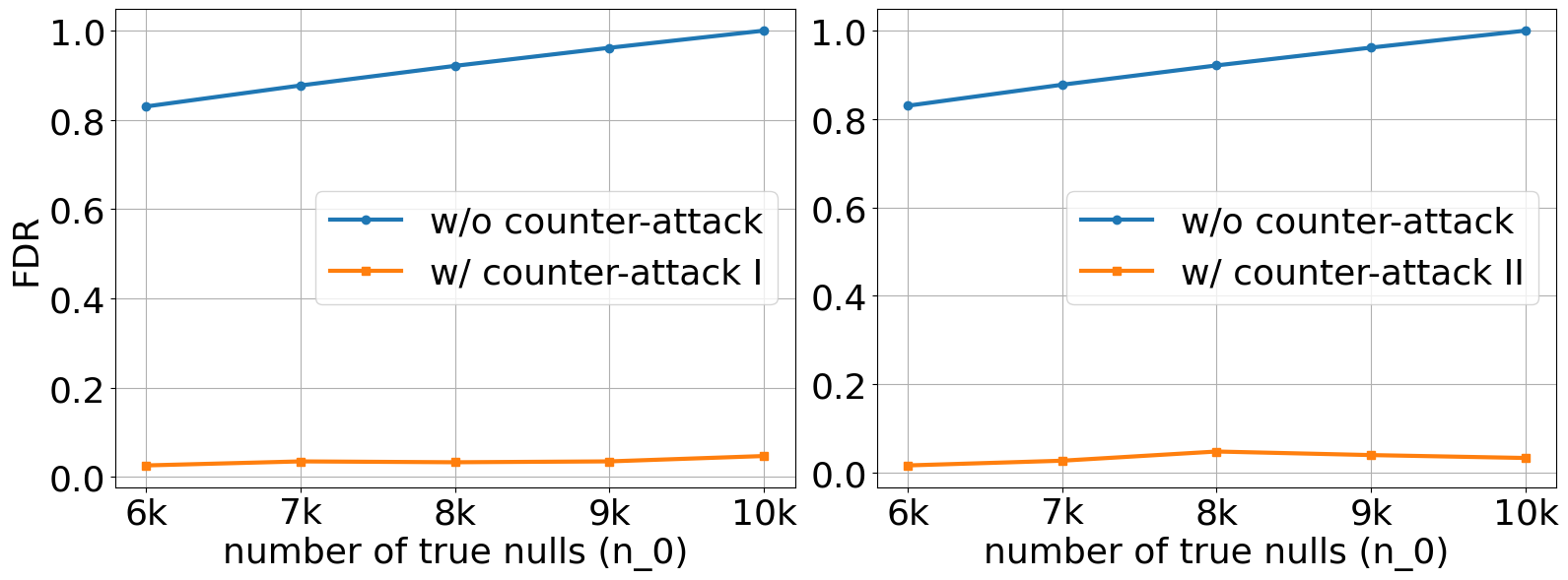}
      
        \label{fig:example4}
    \end{subfigure}
    \hfill
    \vspace{-1.2em}
    \caption{Exp. 2. Effectiveness of counter-attack methods. Comparison of FDR with vs. without applying two types of counter-attack schemes. Counter-attack I (left) and II (right).}
    \label{fig:three_in_one_row}
    \vspace{-1em}
\end{figure}


\smallskip
\noindent\underline{\bf Exp. 3: Two stronger attack models}

\smallskip

\noindent We illustrate the two stronger attack models as mentioned in the previous section; it is important to note that the two counter-attack methods in Exp. 2 do not work for these two attacks, since the nulls and non-nulls are indistinguishable from the central server's perspective. The first plot shows how the enhanced BH-classifier attack significantly increases the FDR as the number of true null hypotheses ($n_0$) increases but subsequently decreases when
$n_0$ becomes excessively large. To explain this phenomenon, we found in our experiments that when $n_0$ becomes too large, the attacker's local BH-classification step will make very few rejections. Consequently, the rescaling step will alter the majority of local non-null $p$-values to smaller values, which ultimately helps the central agent make more correct rejections. The second plot, focusing on the shuffling attack, reveals a less pronounced increase in FDR. Although FDR still grows with $n_0$ and $m/n$, the shuffling attack's impact is weaker and less dynamic due to its lack of strategic manipulation. Together, the plots demonstrate that the enhanced BH-classifier attack is more effective in exploiting the hypothesis testing process to compromise FDR, especially at larger attacker fractions. 


\begin{figure}[htbp]
    \centering
    \begin{subfigure}[t]{1.0\linewidth} 
        \centering
        \includegraphics[width=\linewidth]{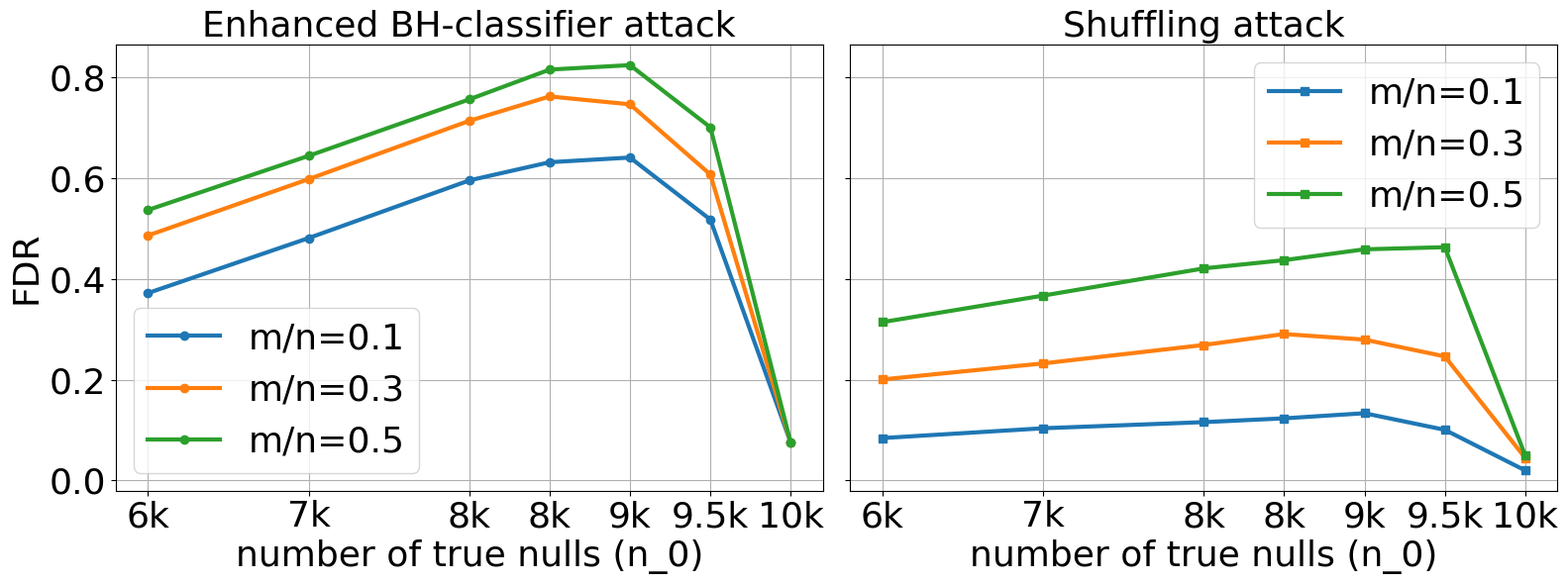} 
        \label{fig:fig1}
    \end{subfigure}
    \hspace{0.05\linewidth} 
    \vspace{-1.2em}
    \caption{Exp. 3. Comparison of two stronger attack models.}
    \label{fig:side_by_side_fixed}
    \vspace{-1em}
\end{figure}

\smallskip
\noindent\underline{\bf Exp. 4: Attacking multiple nodes in a network}
\smallskip

\noindent In this experiment, we assume the attacker captures a fraction ($\lambda$) of the total $d$ nodes, where each node contains $n/d$ local $p$-values, resulting in the same total of $m$ $p$-values under attack. We studied how FDR and power (defined in Section~\ref{sec:form}) behave when we increase $\lambda$ under three attack models: BH-classifier, enhanced BH‑classifier, and shuffling. Note that in this setting, the test statistics for alternative hypothesis are sampled from $N(\mu, 1)$, where $\mu \sim \text{Unif}(2.5, 3.0)$ to better illustrate the change in power. 

\begin{figure}[htbp]
    \centering
    \begin{subfigure}[t]{1.0\linewidth} 
        \centering
        \includegraphics[width=\linewidth]{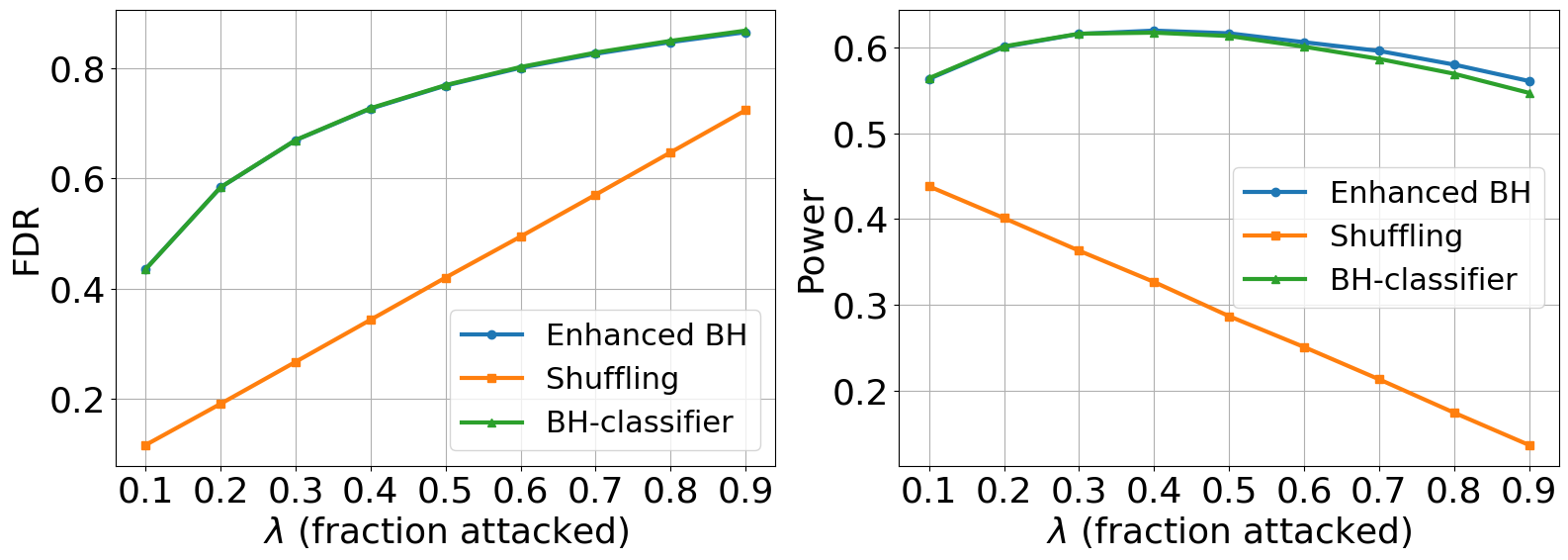} 
        \label{fig:fig1}
    \end{subfigure}
    \hspace{0.05\linewidth} 
\vspace{-1.2em}
    \caption{Exp. 4. Comparison of the three attack models in the distributed setting ($d=20$).}
    \label{fig:side_by_side_fixed}
    \vspace{-1em}
\end{figure}

The results indicates that for the shuffling attack, FDR increases linearly as $\lambda$ increases. While for the enhanced BH-classifier attack, FDR rises much more steeply at small $\lambda$ and then begins to level off as $\lambda$ grows. In other words, the enhanced BH‑classifier attack injects so many low p-values even when only a few nodes are compromised that the global BH procedure already suffers a high FDR; adding more attacked nodes yields only diminishing marginal increases. 

\section{Discussion}
 Our initial studies reported in this work open up several natural and important future directions. When the exact $p$-values are not available at each agent, we will study the impact of Byzantine attacks on empirical $p$-values or more general data-driven score functions (e.g., neural network-based methods~\cite{marandon2024adaptive}). To handle large-scale settings where each agent has a large number of local test statistics, it becomes important to incorporate the resource-efficiency consideration (e.g., with a limited communication budget~\cite{pournaderi2023sample,pournaderi2023large}) into the attack and counter-attack models. Furthermore, the analysis of the detection power is important in providing a comprehensive understanding of different attack models as well as counter-attack strategies. Finally, it would be worthwhile to broaden the class of attack models, drawing inspiration from existing ones (e.g., altering the order of statistics~\cite{quan2023ordered}). 


\section*{Acknowledgment}
This work was supported in part by the National Science Foundation
under Grant CCF-2420146.

    




\newpage
\balance
\bibliography{ref.bib}
\bibliographystyle{IEEEtran}

\end{document}